\documentclass[twocolumn,prb,floatfix,nofootinbib,superscriptaddress]{revtex4-2}

\usepackage[utf8]{inputenc}
\usepackage[T1]{fontenc}
\usepackage[english]{babel}
\usepackage{amsmath,amssymb,amsthm}
\usepackage[normalem]{ulem}
\usepackage{xcolor}
\usepackage{graphicx}
\usepackage{wasysym}
\usepackage{physics}

\newcommand{\Nb}{N_{\mathrm b}}
\newcommand{\Nf}{N_{\mathrm f}}
\newcommand{\T}[1]{\overline{#1}}             
\newtheorem{proposition}{Proposition}

\usepackage{hyperref}
\hypersetup{
    colorlinks,
    citecolor=black,
    filecolor=black,
    linkcolor=black,
    urlcolor=blue
}

\begin{document}

\title{Exact SU($N$) symmetry resolution of bosonic Hubbard models: general boson-fermion mappings and flat-band singlet-cluster states}
\author{Lo\"ic Herviou}
\affiliation{Univ. Grenoble Alpes, CNRS, LPMMC, 38000 Grenoble, France}

\begin{abstract}
We show how to perform exact diagonalization of SU($N$)-symmetric Bose-Hubbard models separately in each irreducible representation of SU($N$), by exploiting the Howe duality between U($L$) and U($N$).
Three structural consequences follow for any lattice and any Hamiltonian built from the U($L$) generators: (i) the spectrum in a given orbital sector is independent of $N$ and of the particle statistics, (ii) the canonical partition functions for bosons and fermions obey the exact identity $Z_{\mathrm b}(-N)=(-1)^{\Nb}Z_{\mathrm f}(N)$, (iii) with unfrustrated hopping the groundstate is always a color ferromagnet, so that the color-resolved method we introduce is useful to study excited sectors, frustrated lattices and finite temperatures.
As applications, we first derive the finite-temperature phase diagram of attractive SU($N$) bosons on few-site rings at large $N$ for arbitrary filling.
We then show the emergence of singlet-cluster states from frustrated hopping, the counterpart of fermionic ferromagnetism, with the example of the sawtooth and kagome lattice. 
\end{abstract}

\maketitle

\section{Introduction}
Ultracold alkaline-earth(-like) atoms have turned SU($N$)-symmetric Hubbard models from a theoretical device into an experimental platform~\cite{Gorshkov2010,CazalillaRey2014}.
Because their nuclear spin decouples from the electronic degrees of freedom, fermionic isotopes of ytterbium and strontium realize flavor-independent interactions with up to $N=2I+1=10$ components.
This has enabled the observation of SU(6) Mott insulators~\cite{Taie2012}, one-dimensional liquids with tunable $N$~\cite{Pagano2014}, SU($N$)-symmetric spin exchange~\cite{Scazza2014,Zhang2014} and antiferromagnetic correlations in SU($N$) Hubbard models~\cite{Hofrichter2016,Ozawa2018,Taie2022}, with quantum-gas microscopes now resolving multi-flavor systems site by site~\cite{Mongkolkiattichai2025}.
%
A striking feature of these experiments is that, as $N$ grows at fixed particle number, SU($N$) fermions increasingly behave like bosons~\cite{Pagano2014,Song2020}.

Bosons with SU($N$)-symmetric interactions have long remained out of reach, since the alkaline-earth mechanism mainly provides SU($N$) symmetry for fermions with repulsive interactions~\cite{Mukherjee2025}.
Spinor Bose gases offer only approximate realizations, with small spin-dependent scattering lengths~\cite{StamperKurnUeda2013}.
This has recently changed: ultracold polar molecules shielded from inelastic collisions by static or microwave fields are predicted to display SU($N$) symmetry for both bosons and fermions, with attractive or repulsive interactions and potentially large $N$~\cite{Mukherjee2025,MukherjeeHutson2025}.
In parallel, bosonic molecules have been brought to quantum degeneracy, up to Bose-Einstein condensation~\cite{Lin2023,Bigagli2024}.
SU($N$) Bose-Hubbard models are thus becoming experimentally relevant, in different regimes than their fermionic alkaline-earth counterparts.

For bosons with unfrustrated hopping, however, SU($N$) symmetry is not expected to produce new groundstate physics.
The Eisenberg-Lieb theorem~\cite{Eisenberg2002} ensures that the groundstate is an SU($N$) ferromagnet, whose energy is that of spinless bosons: consequently, color physics requires frustration.
Frustrated bosonic hopping is available in cold-atom experiments: periodic driving can reverse the sign of the tunneling amplitude~\cite{Eckardt2005,Lignier2007,Eckardt2017}, which was used to realize frustrated magnetism of bosons on the triangular lattice~\cite{Struck2011,Struck2013}.
Optical lattices with flat bands, such as the kagome~\cite{Jo2012} and Lieb~\cite{Taie2015} lattices, have been implemented, and negative-temperature states~\cite{Braun2013} can effectively turn the upper flat band of the Kagome lattice into its lowest one.
Flat bands are also realized in photonic and polaritonic systems~\cite{Leykam2018}.
For spinless bosons, frustrated flat bands lead to crystals of localized states~\cite{Schulenburg2002,Zhitomirsky2004,Huber2010,Derzhko2015}, while for SU($N$) fermions they underlie flat-band ferromagnetism~\cite{Mielke_1991a,Tasaki1992,LiuNieZhang2019,
jin2026flatbandferromagnetismsunhubbard}.
Their SU($N$) bosonic counterpart has, to our knowledge, not been studied.

Addressing these questions numerically requires resolving the full SU($N$) symmetry, since the Hilbert space grows as $N^{\Nb}$.
For fermions, this has been achieved with bases adapted to the irreducible representations of SU($N$): standard Young tableaux for Heisenberg models~\cite{NatafMila2014}, and semi-standard Young tableaux of the unitary group U($L$) for Fermi-Hubbard models~\cite{Botzung2024,BotzungNagaoka2024}, extended to DMRG~\cite{Nataf2024DMRG}.
The latter approach also gives access to few-site systems in the limit of large $N$, relevant for double wells~\cite{Albiez2005}, tweezer arrays and molecular platforms, where it revealed zero- and finite-temperature symmetry-breaking transitions~\cite{Nataf2025,Herviou2026}.
In all these works the Pauli principle is essential: it ties the orbital irrep to the \emph{transpose} of the color irrep.
For bosons, numerical studies have so far relied at best on the abelian U(1)$^N$ subgroup.

In this work, we show that the fermionic color-factorization scheme carries over to bosons, with one modification: the orbital and color irreps coincide.
Our main results are:
(i) an exact diagonalization scheme resolving the full SU($N$) symmetry of Bose-Hubbard models, including soft- and hard-core constraints (Sec.~\ref{sec:method});
(ii) a dictionary between bosonic and fermionic spectra valid on any lattice with an exact identity $Z_{\mathrm b}(-N)=(-1)^{\Nb}Z_{\mathrm f}(N)$ between canonical partition functions (Sec.~\ref{sec:dualities});
(iii) the finite-temperature transitions of attractive SU($N$) bosons on few-site rings at large $N$ and arbitrary filling, from which the fermionic results of Ref.~\cite{Herviou2026} follow by $N\to-N$ (Sec.~\ref{sec:0D});
(iv) \emph{flat-band singlet-cluster states}, the bosonic counterpart of flat-band ferromagnetism (Sec.~\ref{sec:Frustration}).
We prove that frustrated flat bands stabilize color-antisymmetric groundstates for $\Nb\le N$ and construct exact SU($N$)-singlet groundstates at finite density on the sawtooth chain and the kagome lattice, where their existence reduces to a packing problem.

\section{color factorization for bosons}\label{sec:method}
\paragraph{Model.}
We consider SU($N$) Hubbard models on a lattice of $L$ sites,
\begin{equation}
    H = -\sum_{i, j} t_{ij}\, E_{i, j} + \frac{U}{2} \sum_{i} E_{i, i}^2,
    \qquad
    E_{i, j} = \sum_{\sigma=1}^N c^\dagger_{i, \sigma} c_{j, \sigma},
    \label{eq:Ham}
\end{equation}
where $t_{ji}=t_{ij}^*$ are hopping amplitudes, $c_{i,\sigma}$ are bosonic (or fermionic)
annihilation operators, and $\sigma\in\{1,\dots,N\}$ labels the color.
We denote by $n_i=E_{i,i}=\sum_\sigma n_{i,\sigma}$, with $n_{i,\sigma}=c^\dagger_{i,\sigma}c_{i,\sigma}$, the number of particles on site $i$.
The interaction $E_{i,i}^2=n_i^2$ differs from the conventional $n_i(n_i-1)$ by $n_i$, though at fixed total particle number $\Nb$ and uniform $U$, the two conventions give identical spectra (and eigenstates) up to the shift $-U\Nb/2$.
For either statistics, the $E_{i,j}$ satisfy the commutation relations of $\mathfrak{gl}(L)$, the complexification of $\mathfrak u(L)$,
\begin{equation}
    [E_{i, j}, E_{k, l}] = \delta_{j, k} E_{i, l} - \delta_{i, l} E_{k, j}.
    \label{eq:uL}
\end{equation}
The rest of this Section holds for \emph{any} Hamiltonian that is a polynomial in the $E_{i,j}$: arbitrary hoppings, including complex ones, density-density interactions $V_{ij}E_{i,i}E_{j,j}$, pair hopping, exchange terms $E_{i,j}E_{j,i}$\dots
It does not cover color-dependent interactions, such as $\sum_{i,\sigma}n_{i,\sigma}^2$, which break the SU($N$) symmetry, or particle-number non-conserving terms.
The use of the generators $E_{i,j}$ and of Gelfand-Tsetlin bases for many-body problems goes back to the unitary group approach of nuclear physics and quantum
chemistry~\cite{Moshinsky1968,Paldus1974,Shavitt1977}; for SU($N$) lattice models it was developed in Refs.~\cite{NatafMila2014,Botzung2024,Nataf2024DMRG}.\\

\paragraph{Young diagrams and tableaux.}
A Young diagram $\alpha=(\alpha_1\ge\alpha_2\ge\dots\ge\alpha_{\ell(\alpha)}>0)$ with $|\alpha|=\sum_k\alpha_k=n$ boxes is a left-justified array of $n$ boxes in $\ell(\alpha)$ rows, the $k$-th row containing $\alpha_k$ boxes.
$\T\alpha$ denotes the transposed diagram.
Diagrams with $\ell(\alpha)\le N$ label the irreducible representations (irreps) of U($N$) appearing in $(\mathbb{C}^N)^{\otimes n}$~\cite{Weyl1939,FultonHarris1991}.
In the Young-symmetrizer construction, rows symmetrize and columns antisymmetrize the tensor indices; the single row $(n)$ is the fully symmetric irrep and the single column $(1^n)$, with $n\le N$, the fully antisymmetric one.
A basis of the irrep $\alpha$ is labeled by the semi-standard Young tableaux (SSYT) of shape $\alpha$~\cite{Fulton1997}: fillings of the boxes with labels in $\{1,\dots,N\}$, weakly increasing along rows and strictly increasing down columns.
This basis is in one-to-one correspondence with Gelfand-Tsetlin patterns~\cite{GelfandTsetlin1950,Alex2011}.
Each basis vector is a weight vector with weight component $m_i$ the number of occurrences of the label $i$, i.e., an SSYT is an eigenvector of $E_{j, j}$ with eigenvalue $m_j$.
The dimension of the irrep is the number of SSYT, given by the hook-content
formula~\cite{Stanley1999}
\begin{equation}
  d_{\alpha,N}=\prod_{b\in\alpha}\frac{N+c(b)}{h(b)},
  \label{eq:hookcontent}
\end{equation}
where a box $b$ in row $r$ and column $s$ has content $c(b)=s-r$ and hook length $h(b)$, the number of boxes to its right, below it, and itself.\\

\paragraph{Howe duality.}
On the $\Nb$-particle Fock space, the operators $E_{i,j}$ (which generate U($L$), due to Hermiticity) and the color rotations (U($N$)) generate mutually commuting algebras, each the commutant of the other.
For bosons, the Hilbert space $\mathcal H_{\rm b}=\mathrm{Sym}^{\Nb}(\mathbb C^L\otimes\mathbb C^N)$ decomposes as~\cite{Howe1989,Howe1995,GoodmanWallach2009, Neergaard2021}
\begin{equation}
    \mathcal{H}_{\rm b}=\bigoplus_{\substack{|\alpha| = \Nb\\ \ell(\alpha)\leq \min(L,N)}}
    V^{(L)}_{\alpha}\otimes V^{(N)}_{\alpha},
    \label{eq:howe_b}
\end{equation}
where U($L$) acts on $V^{(L)}_{\alpha}$ and U($N$) on $V^{(N)}_{\alpha}$ as the irrep $\alpha$.
For $N_{\rm f}$ fermions~\cite{Howe1995,GoodmanWallach2009,Neergaard2021,Botzung2024}, $\mathcal H_{\rm f}=\wedge^{N_{\rm f}}(\mathbb C^L\otimes\mathbb C^N)$ and
\begin{equation}
    \mathcal{H}_{\rm f}=\bigoplus_{\substack{|\alpha| = N_{\rm f}\\ \ell(\alpha)\leq L,\ \alpha_1\leq N}}
    V^{(L)}_{\alpha}\otimes V^{(N)}_{\T\alpha}.
    \label{eq:howe_f}
\end{equation}
The origin of the difference is the Schur-Weyl duality: if the orbital and color parts of an $n$-particle wavefunction transform under irreps $\alpha$ and $\mu$ of the symmetric group $S_n$, $\alpha\otimes\mu$ contains the trivial representation (bosons) only if $\mu=\alpha$, and the sign representation (fermions) only if $\mu=\T\alpha$.

Hence, in the bosonic sector of color irrep $\alpha$,
\begin{equation}
  \dim\mathcal{H}_{{\rm b}, \alpha} = d_{\alpha,L}d_{\alpha,N},
  \qquad
  H\big|_{\mathcal{H}_{{\rm b},\alpha}}= H_\alpha\otimes \mathrm{Id}_{d_{\alpha,N}},
  \label{eq:dimsector}
\end{equation}
where $H_\alpha$ is the $d_{\alpha,L}\times d_{\alpha,L}$ matrix of $H$ in the U($L$) irrep $\alpha$.
We use as orthonormal basis of $V^{(L)}_{\alpha}$ the SSYT of shape $\alpha$ filled with site labels.
The action of the $E_{i,j}$ on this basis depends only on the U($L$) irrep and not on the statistics: it is given by the formulas of Ref.~\cite{Botzung2024}, derived there for the fermionic orbital shapes $\T\alpha$ and applied here to the shapes $\alpha$.
This yields $H_\alpha$ directly for any irrep $\alpha$, free of the SU($N$) degeneracy.\\

\paragraph{Constrained bosons.}
The basis states are eigenstates of the $E_{j,j}$: a tableau in which the site label $j$ appears $k$ times has $k$ bosons at position $j$.
The projector $P_n$ onto states with at most $n$ bosons per site is therefore diagonal in this basis, and the constrained model $P_nHP_n$ is obtained by discarding the tableaux in which some label appears more than $n$ times.
For $n=1$, only tableaux with distinct entries survive, and, in the Mott limit and at filling $1$, we recover the standard Young tableaux basis of Ref.~\cite{NatafMila2014}.
Since $P_n$ is itself a polynomial in the $n_i$, $P_nHP_n$ in fact belongs to the algebra generated by the $E_{i,j}$, and all the results of this section, and of Sec.~\ref{sec:dualities}, apply to constrained models as well.

\section{Structural consequences}\label{sec:dualities}
The color factorization of Sec.~\ref{sec:method} reduces any SU($N$)-invariant Hamiltonian to a family of orbital problems $H_\alpha$, labeled by U($L$) irreps $\alpha$, which depend neither on $N$ nor on the particle statistics.
In this section, we first derive the resulting exact relations between spectra and between thermodynamic quantities and finally discuss when the color resolution yields a computational gain, and when the groundstate is trivially fixed by symmetry.

\subsection{Spectral relations}\label{sec:spectral}
The following statements hold for any lattice, dimension and Hamiltonian polynomial in the $E_{i,j}$, including the constrained models of Sec.~\ref{sec:method}.\\

\paragraph{$N$-independence.}
The bosonic spectrum in the orbital sector $\alpha$ is identical for all $N\ge\ell(\alpha)$, and the fermionic one for all $N\ge\alpha_1$.
Only the color degeneracies, $d_{\alpha,N}$ for bosons and $d_{\T\alpha,N}$ for fermions, depend on $N$.
A single diagonalization per orbital sector therefore yields the spectrum for all $N$.\\

\paragraph{Boson-fermion correspondence.}
At equal particle number $\Nb=\Nf$, an orbital irrep $\alpha$ is an admissible bosonic sector if $\ell(\alpha)\le\min(N,L)$, and an admissible fermionic sector if $\ell(\alpha)\le L$ and $\alpha_1\le N$.
Comparing these conditions gives three inclusions between the sets of sectors, and hence between the sets of energy levels.\\
(i) If $L\le N$, every fermionic sector is a bosonic sector.\\
(ii) If $\Nb\le N$, every bosonic sector is a fermionic sector.\\
(iii) SU($N$) bosons and SU($N'$) fermions with $N'\ge\Nb$ share exactly the same sectors if and only if $N\ge\min(L,\Nb)$.
The degeneracies of a shared level, $d_{\alpha,N}$ and $d_{\T\alpha,N'}$, differ in general.
The underlying principle, that for spin-independent Hamiltonians the energies depend only on the permutation symmetry of the spatial wave function, with conjugate Young diagrams for bosons and fermions, is familiar from spin-free quantum chemistry~\cite{Matsen1964,Kaplan1975}.\\

\subsection{Thermodynamics: the $N\to-N$ correspondence}\label{sec:NmN}
The canonical partition functions of bosons and fermions at fixed number of particles $N_b$ decompose as
\begin{equation}
Z_{\mathrm b}(N)=\sum_{\alpha} d_{\alpha,N}\,Z_\alpha,\qquad
Z_{\mathrm f}(N)=\sum_{\alpha} d_{\T\alpha,N}\,Z_\alpha,
\end{equation}
with $Z_\alpha=\Tr e^{-\beta H_\alpha}$, and both sums run over all diagrams with $\Nb$ boxes and $\ell(\alpha)\le L$, independently of $N$.
The $N$-dependent admissibility conditions are enforced by the degeneracies: for integer $N\ge 0$, the hook-content formula~\eqref{eq:hookcontent} contains a vanishing factor, so that $d_{\alpha,N}=0$ when $\ell(\alpha)>N$ and $d_{\T\alpha,N}=0$ when $\alpha_1>N$.
Each $d_{\alpha,N}$ is a polynomial of degree $|\alpha|$ in $N$, and so is $Z_{\mathrm b/\mathrm{f} }(N)$.
Transposition flips the contents $c(b)$ while preserving the hook lengths $h(b)$, so that $d_{\alpha,-N}=(-1)^{|\alpha|}d_{\T\alpha,N}$ and
\begin{equation}
  Z_{\mathrm b}(-N)=(-1)^{\Nb}\,Z_{\mathrm f}(N)
  \label{eq:NmN}
\end{equation}
at equal particle number, for any temperature, lattice and Hamiltonian polynomial in the $E_{i,j}$.
The same identity holds for the numerator of any canonical thermal average of an operator in the algebra generated by the $E_{i,j}$, i.e., for all SU($N$)-invariant, particle-conserving operators.
The signs cancel in the ratio, and $\langle O\rangle_{\mathrm b}(-N)=\langle O\rangle_{\mathrm f}(N)$.
In particular, in large-$N$ models at fixed filling $\nu=\Nb/N$, fermionic results follow from bosonic ones by $\nu\to-\nu$ (see Sec.~\ref{sec:0D}).
Equation~\eqref{eq:NmN} is an instance of the negative-dimension correspondence between U($N$) and U($-N$) with transposed Young diagrams~\cite{Parisi1979,Dunne1989,Cvitanovic:2008zz, BereleRegev1987,james2006representation}.

\subsection{When color resolution pays off}\label{sec:gain}
\paragraph{Hilbert-space reduction.}
In the totally symmetric sector $\alpha=(\Nb)$, the color state can be chosen fully polarized, and the resolution of the SU($N$) sectors gives no gain over U(1)$^N$: this sector matches the spinless Bose-Hubbard model.
The gain is in multi-row sectors, in particular for the color singlets.
As shown in Table~\ref{tab:dims}, their orbital dimension is smaller than the dimension of the U(1)$^N$ sector containing them by several orders of magnitude.

\begin{table}[t]
\centering
\begin{tabular}{cccccc}
$\Nb$ & $L$ & $N$ & sector $\alpha$ & $d_{\alpha,L}$ & U(1)$^N$ sector\\ \hline
9 & 9 & any & $(9)$ & $2.43\times10^{4}$ & $2.43\times10^{4}$\\
9 & 9 & 3 & $(3^3)$ & $4.16\times10^{4}$ & $4.49\times10^{6}$\\
9 & 9 & 9 & $(1^9)$ & $1$ & $3.87\times 10^{8}$\\
9 & 12 & any & $(9)$ & $1.68\times10^{5}$ & $1.68\times10^{5}$\\
9 & 12 & 3 & $(3^3)$ & $5.73\times10^{5}$ & $4.82\times10^{7}$\\
9 & 12 & 9 & $(1^9)$ & $220$ & $5.16\times10^{9}$\\
8 & 16 & any & $(8)$ & $4.90\times10^{5}$ & $4.90\times10^{5}$\\
8 & 16 & 2 & $(4^2)$ & $2.37\times10^{6}$ & $1.50\times10^{7}$\\
8 & 16 & 4 & $(2^4)$ & $8.66\times10^{5}$ & $3.42\times10^{8}$\\
8 & 16 & 8 & $(1^8)$ & $1.29\times10^{4}$ & $4.29\times10^{9}$
\end{tabular}
\caption{Dimension $d_{\alpha,L}$ of the orbital problem in the SU($N$) singlet sector (or fully symmetric sector for $\alpha=(\Nb)$), compared with the dimension of the U(1)$^N$ sector containing it, i.e., the sector with equal color populations $\Nb/N$ (fully polarized for $\alpha=(\Nb)$), for bosons. The examples are taken from Sec.~\ref{sec:Frustration}.}
\label{tab:dims}
\end{table}

\paragraph{Unfrustrated bosons: a trivial groundstate.}
For groundstate studies, the reduction is useful only if the groundstate lies outside the fully symmetric sector.
For bosons, this is excluded in a broad class of models.
Suppose that all $t_{ij}$ in Eq.~\eqref{eq:Ham} are real and non-negative, or can be made so by a gauge transformation.
Then all off-diagonal elements of $H$ in the bosonic occupation basis are non-positive.
The Hamiltonian is stoquastic~\cite{Bravyi2008}, hence amenable to sign-free quantum Monte Carlo, and the lattice version of the Eisenberg-Lieb argument~\cite{Suto_1993,Eisenberg2002} shows that a groundstate lies in the totally symmetric sector $\alpha=(\Nb)$, i.e., is an SU($N$) ferromagnet, for any $U$, $N$ and lattice.
If, in addition, the hopping graph is connected and $U<\infty$, the groundstate manifold is exactly this multiplet, and the groundstate problem reduces to spinless bosons.
The color-resolved basis remains useful in this case for excitations in other color sectors (magnons and multi-magnon bound states), at finite temperature, where all sectors contribute, and for dynamics from color-entangled initial states.
The theorem does not apply to frustrated hoppings, i.e., with sign products that cannot be gauged away (e.g. triangular or kagome lattices with the frustrated sign of $t_{ij}$, see Sec.~\ref{sec:Frustration}), nor to generic complex hoppings (e.g. in the presence of a synthetic magnetic field).

\section{Few-site rings at large $N$}\label{sec:0D}
As a first application, we consider few-site systems in the large-$N$ limit.
Following Ref.~\cite{Herviou2026}, we study the ring
\begin{equation}
    H_{0}^{(L)} = -t \sum_{j = 1}^L \left( E_{j, j+1} + E_{j+1, j} \right)
    + \frac{U}{2}\sum_{j = 1}^L  E_{j, j}^2 ,
    \label{eq:Ham0D}
\end{equation}
with $t>0$, $E_{L,L+1}\equiv E_{L,1}$, and consider an unusual thermodynamic limit $N\to\infty$ at fixed $L$ and
fixed filling $\nu=\Nb/N$.
For $L=2$ the periodic sum counts the single bond twice.\\

\paragraph{Zero temperature.}
Since $t>0$, the theorem of Sec.~\ref{sec:dualities} places the groundstate in the sector $(\Nb)$ for every $\nu$. 
It is an SU($N$) ferromagnet, whose multiplet contains the fully polarized states, and its energy is that of $\Nb$ spinless attractive bosons on the ring.
The zero-temperature physics is therefore the self-trapping transition of the attractive Bose-Hubbard ring, which has been studied extensively~\cite{Buonsante2005,Oelkers2007,Javanainen2008,Buonsante2010,Buonsante2012}.
In particular, the critical couplings depend on $U\Nb$ only.
They coincide with those of $\Nf=N$ fermions~\cite{Herviou2026}.
Following the dualities of Sec.~\ref{sec:spectral}, for $\Nf\le N$, the fermionic groundstate also lies in the orbital sector $(\Nf)$, where it carries the color irrep $(1^{\Nf})$ (i.e., the SU($N$) singlet when $\Nf=N$).
For $L=2$, the model reduces in this sector to the Lipkin-Meshkov-Glick model~\cite{Lipkin1965, Nataf2025}.
It has a continuous transition at $U_c\Nb=-4t$, between a uniform phase and a self-trapped phase with a macroscopic population imbalance between the two sites.
For $L\ge3$ the transition becomes first order~\cite{Buonsante2005,Herviou2026}.
The only difference between bosons and fermions is the degeneracy of the groundstate: the bosonic groundstate is the SU($N$) ferromagnet $(\Nb)$, of dimension $\binom{N+\Nb-1}{\Nb}$, instead of the fermionic singlet.\\

\paragraph{Finite temperature, $L=2$.}
At finite temperature, the different color degeneracies of bosons and fermions change the critical temperature.
The bosonic partition function reads
\begin{equation}
    Z_{\mathrm b} = \sum_{\alpha} d_{\alpha, N}\, \Tr\, e^{-\beta H_\alpha},
\end{equation}
where the sum runs over two-row diagrams $\alpha=(\alpha_1,\alpha_2)$.
The orbital dimension $d_{\alpha,2}=\alpha_1-\alpha_2+1$ grows only polynomially, while
\begin{equation}
d_{(\alpha_1,\alpha_2),N}=\binom{N+\alpha_1-1}{\alpha_1}\binom{N+\alpha_2-2}{\alpha_2} \frac{\alpha_1-\alpha_2+1}{\alpha_1+1}
\end{equation}
grows exponentially in $N$.
To leading exponential order we may therefore replace $\Tr e^{-\beta H_\alpha}$ by $e^{-\beta E_{G,\alpha}}$, with $E_{G,\alpha}$ the lowest energy of $H_\alpha$~\cite{Herviou2026}.
We parametrize $\alpha=(\Nb(1-\varepsilon),\Nb\varepsilon)$ with $0\le\varepsilon\le1/2$.
In the strong-coupling, large-$N$ regime $|U|\Nb\gg t$, we can approximate $E_{G,\alpha}\simeq U\Nb^2[1+(1-2\varepsilon)^2]/4$, and $N^{-1}\log d_{\alpha,N}\to f(1+\nu(1-\varepsilon))+f(1+\nu\varepsilon)$ with $f(x)=x\log x-(x-1)\log(x-1)$.
Hence
\begin{multline}
    e^{\beta U \Nb^2/4}\, Z_0 \simeq \int_0^{1/2} d\varepsilon\,
    \exp\Big\{-\frac{\beta U \Nb^2}{4}(1-2\varepsilon)^2 \\
  + N [f(1 + \nu (1-\varepsilon)) + f(1 + \nu \varepsilon)]\Big\}.
\end{multline}
The exponent has two families of stationary points: the symmetric point $\varepsilon=1/2$, stationary at all temperatures, which describes the disordered phase, and the solutions of
\begin{equation}
   \tilde{\beta} \equiv  \beta U \Nb = \frac{1}{1-2\varepsilon}
   \log\!\left[ \frac{1 +\nu(1-\varepsilon)}{1 + \nu \varepsilon}\,  \frac{\varepsilon}{1 -\varepsilon}\right].
    \label{eq:saddleL2}
\end{equation}
The right-hand side increases monotonically from $-\infty$ at $\varepsilon=0$ to $-4/(2+\nu)$ at $\varepsilon=1/2$.
This gives a continuous transition at
\begin{equation}
\beta_c U\Nb=-\frac{4}{2+\nu},
\label{eq:TcL2}
\end{equation}
below which the population imbalance $\langle(n_1-n_2)^2\rangle/\Nb^2\to(1-2\varepsilon^*)^2$ becomes macroscopic, where we denoted by $\varepsilon^*$ the stationary point at effective temperature $\beta$.
Using Eq.~\eqref{eq:NmN}, i.e., setting $\nu=-1$, Eq.~\eqref{eq:TcL2} reproduces the fermionic result $\beta_cUN=-4$ of Ref.~\cite{Herviou2026}.
At equal $N=\Nb$, bosons therefore order at a temperature three times higher than fermions.\\

\paragraph{Finite temperature, $L\ge3$.}
We now write $\alpha_j = \Nb\varepsilon_j$, with $\varepsilon_1=1-\sum_{j\ge2}\varepsilon_j$.
For $U<0$, the lowest interaction energy in the irrep $\alpha$ is reached at the highest weight $n_j=\alpha_j$, so that $E_{G,\alpha}\simeq\frac{U\Nb^2}{2}\sum_{j=1}^L\varepsilon_j^2$ at
strong coupling, and 
\begin{equation}
   \log d_{\alpha,N} \simeq N\sum_{j=1}^L f(1+\nu\varepsilon_{j}).
\end{equation}
On the line $\varepsilon_{j}=\varepsilon$ for $j\ge2$, the saddle-point condition becomes 
\begin{equation}
    \tilde{\beta} = \frac{1}{1 - L\varepsilon}
    \log\!\left[ \frac{1 + \nu (1-(L-1)\varepsilon)}{1 + \nu \varepsilon}
    \,\frac{\varepsilon}{1-(L-1)\varepsilon}\right].
    \label{eq:saddleL}
\end{equation}
We checked numerically, for $3 \leq L \leq 6$, that the global maximum of the exponent lies on this line.
Now, as in the fermionic case, $\tilde{\beta}(\varepsilon)$ is non-monotonic.
Optimizing the free energy ensures that $\varepsilon^*$ jumps discontinuously away from the symmetric point $\varepsilon=1/L$.
The transition is therefore first order, for all values of $\nu$ considered.
The critical values are listed in Table~\ref{tab:CriticalFirstOrder0D}.
The row $\nu=-1$ reproduces the fermionic values of Ref.~\cite{Herviou2026}.
As for $L=2$, these are strong-coupling values, and they are shifted at finite $|U|\Nb/t$.

\begin{table}[t]
    \centering
    \begin{tabular}{c|cccc}
         $L$ & 3 & 4 & 5 & 6  \\ \hline
         $\nu = 2$ & 1.63013 & 2.09970 & 2.47384 & 2.78492 \\
         $\nu = 1$ & 2.04808 & 2.55256 & 2.94481 & 3.26675  \\
         $\nu = 1/2$ & 2.35312 & 2.87070 & 3.26917 & 3.59464 \\
         $\nu = 1/3$ & 2.47721 & 2.99777 & 3.39756 & 3.72376 \\
         $\nu = 1/10$ & 2.67632 & 3.19930 & 3.60014 & 3.92690 \\ \hline
         $\nu=-1$  & 4.4019 & 4.8772 & 5.2756 & 5.6112
    \end{tabular}
    \caption{Strong-coupling critical values of $-\beta_c U\Nb$ for the first-order finite-temperature transition of the bosonic ring, Eq.~\eqref{eq:Ham0D}, obtained from the saddle point. The last row, $\nu=-1$, gives the fermionic values $-\beta_cUN$ for $\Nf=N$, using Eq.~\eqref{eq:NmN}. }
    \label{tab:CriticalFirstOrder0D}
\end{table}

\section{Frustrated hopping: bosons beyond color ferromagnetism}\label{sec:Frustration}
For unfrustrated hopping, the bosonic groundstate always lies in the sector $(\Nb)$, i.e., in the spinless problem.
In this section we therefore turn to frustrated hopping, for which the Eisenberg-Lieb theorem does not apply and the color structure of the groundstate becomes a genuine many-body question.
While few-site bosonic models at large $N$ are tied to fermionic ones by Sec.~\ref{sec:spectral}, this is no longer true in the conventional thermodynamic limit $L\to\infty$ at fixed $N$ and fixed density, allowing for distinct physics.
Indeed, sectors common to both statistics must fit in a box of $N$ rows and $N$ columns, whereas fermionic sectors at density $\Nf/L$ have $\alpha_1\le N$ and hence
$\ell(\alpha)\ge \Nf/N\to\infty$.
Throughout this section we denote by $\rho=\Nb/L$ the density per site.

We focus on the sawtooth chain and the kagome lattice, which both host a flat lowest band for frustrated hopping~\cite{Derzhko2015,Leykam2018}.
For spinless bosons ($N=1$), these lattices have been studied both in the hard-core limit, where the bosons are the magnons of spin-1/2 antiferromagnets close to saturation~\cite{Schnack2001,Schulenburg2002,Richter2004,DerzhkoRichter2004,Zhitomirsky2004}, and for soft-core bosons~\cite{Huber2010}.
There, the localized flat-band states can be occupied independently as long as they do not overlap, which leads to a close-packed crystal at $\rho=1/4$ (sawtooth) and $\rho=1/9$ (kagome)~\cite{Schulenburg2002,Zhitomirsky2004,Huber2010}.
For SU($N$) fermions, the same localized states underlie flat-band
ferromagnetism~\cite{Mielke_1991a,Mielke_1991b,Mielke1992,Tasaki1992,Mielke1993,Mielke1993b, LiuNieZhang2019, TamuraKatsura2019, jin2026flatbandferromagnetismsunhubbard}, and the finite-density problem maps onto a correlated percolation problem~\cite{Maksymenko2012,LiuNieZhang2019}.
To our knowledge, multi-component bosons with SU($N$)-symmetric interactions have not been studied in this frustrated flat-band setting.

\subsection{A flat-band theorem for bosons}\label{sec:flatbanth}
We write the kinetic term of Eq.~\eqref{eq:Ham} in its single-particle eigenbasis, $H_{\mathrm{kin}}=\sum_k \epsilon_k \tilde E_{k,k}$, and denote by
$\epsilon_{\min}=\min_k\epsilon_k$ the lowest single-particle energy and by $D_{\mathrm{flat}}$ the number of single-particle states at $\epsilon_{\min}$, not counting the color.

\begin{proposition}\label{prop:flatband}
If $U>0$, $\Nb\le D_{\mathrm{flat}}$ and $\Nb\le N$, the groundstate energy is
$E_0=\Nb\epsilon_{\min}+U\Nb/2$, and a groundstate lies in the color sector $(1^{\Nb})$, i.e., is totally antisymmetric in color.
For $\Nb=N$, it is an SU($N$) singlet.
\end{proposition}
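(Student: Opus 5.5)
The plan is a variational lower bound matched by an explicit construction. Using the convention of Eq.~\eqref{eq:Ham}, the interaction is $\frac U2\sum_i n_i^2$. With $n_i^2=n_i+n_i(n_i-1)$, this becomes $\frac U2\Nb+\frac U2\sum_i n_i(n_i-1)$. The operator $\sum_i n_i(n_i-1)$ is positive semidefinite, since it is diagonal in the occupation basis with non-negative eigenvalues. Because $U>0$, we get $H\ge H_{\rm kin}+U\Nb/2$. In the single-particle eigenbasis $H_{\rm kin}=\sum_k\epsilon_k\tilde E_{k,k}\ge\epsilon_{\min}\sum_k\tilde E_{k,k}=\epsilon_{\min}\Nb$ on the $\Nb$-particle space. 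Together these give $E_0\ge \Nb\epsilon_{\min}+U\Nb/2$.

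For the matching upper bound, I would build a state that saturates both inequalities at once. It must lie entirely in the flat-band subspace, so that the kinetic energy is exactly $\Nb\epsilon_{\min}$, and it must be annihilated by every $n_i(n_i-1)$, i.e.\ have no doubly occupied site. Choose $\Nb\le D_{\rm flat}$ linearly independent flat-band orbitals $\phi_1,\dots,\phi_{\Nb}$ and $\Nb\le N$ distinct colors. Take the Slater-determinant-like state
\begin{equation}
|\Psi\rangle=\sum_{\pi\in S_{\Nb}}\mathrm{sgn}(\pi)\prod_{a=1}^{\Nb} b^\dagger_{\phi_a,\pi(a)}|0\rangle,\qquad b^\dagger_{\phi,\sigma}=\sum_i\phi(i)c^\dagger_{i,\sigma}.
\end{equation}
Its color part is totally antisymmetric, so it lies in color sector $(1^{\Nb})$. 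Since the colors are distinct, $|\Psi\rangle$ can equivalently be written as an antisymmetrization over orbital assignments. Equivalently, by Howe duality, Eq.~\eqref{eq:howe_b}, it lies in $V^{(L)}_{(1^{\Nb})}\otimes V^{(N)}_{(1^{\Nb})}$.

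The key step, and the one I expect to need the most care, is showing $n_i(n_i-1)|\Psi\rangle=0$. In sector $(1^{\Nb})$ the orbital irrep is the antisymmetric one. This means the orbital wavefunction, in first quantization, is antisymmetric under exchanging particle positions, so two particles can never occupy the same site. More concretely, the annihilation of a pair at site $i$, $\sum_{\sigma\sigma'}c_{i\sigma}c_{i\sigma'}$, is symmetric in the color pair. For bosons the orbital and color symmetries must match, so this pair operator maps the $(1^{\Nb})$ sector into a color-symmetric two-particle piece, which vanishes. Equivalently, in the SSYT basis of $V^{(L)}_{(1^{\Nb})}$ every tableau is a single column with strictly increasing, hence distinct, site labels. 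Each basis state therefore has $n_i\in\{0,1\}$, and since $E_{j,j}$ is diagonal in that basis, $\sum_i n_i(n_i-1)=0$ on the whole sector.

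It remains to check that $|\Psi\rangle$ is nonzero and sits in the flat band. Its orbital factor is the fermionic Slater determinant of the $\phi_a$, which is nonzero by linear independence. This is exactly where $\Nb\le D_{\rm flat}$ is used, while $\Nb\le N$ is needed for the color column to exist. Each $\phi_a$ is an $\epsilon_{\min}$ eigenstate, so $H_{\rm kin}|\Psi\rangle=\Nb\epsilon_{\min}|\Psi\rangle$. Thus $|\Psi\rangle$ attains the lower bound and is a groundstate in sector $(1^{\Nb})$. For $\Nb=N$, the $(1^N)$ irrep of U($N$) is the determinant representation, which is trivial on SU($N$), so the state is an SU($N$) singlet.
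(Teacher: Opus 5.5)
Your proposal is correct and follows essentially the same route as the paper: the same two lower bounds ($\tilde E_{k,k}\ge 0$ with $\sum_k\tilde E_{k,k}=\Nb$ for the kinetic term, and $n_i^2\ge n_i$ for the interaction), saturated by the antisymmetrized product of $\Nb$ linearly independent flat-band orbitals. That state's orbital diagram $(1^{\Nb})$ excludes double occupancy, and by the bosonic Howe duality it carries the color diagram $(1^{\Nb})$. You fill in details the paper leaves implicit: that the state is nonzero by linear independence of the orbitals, that single-column SSYT have distinct site labels and hence $n_i\in\{0,1\}$, and that the determinant representation is trivial on SU($N$).
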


\begin{proof}
Since $\tilde E_{k,k}\ge0$ and $\sum_k\tilde E_{k,k}=\Nb$,
$\langle H_{\mathrm{kin}}\rangle\ge\Nb\epsilon_{\min}$.
Since $E_{i,i}$ has non-negative integer eigenvalues, $E_{i,i}^2\ge E_{i,i}$ and
$\langle\frac U2\sum_iE_{i,i}^2\rangle\ge U\Nb/2$.
Both bounds are saturated by the state whose orbital part is the antisymmetrized product of $\Nb$ linearly independent lowest-band orbitals, which exists since $\Nb\le D_{\rm flat}$: its orbital diagram $(1^{\Nb})$ forbids double occupancy.
For bosons, this orbital sector is admissible if and only if $\Nb\le\min(N,L)$, and the color diagram is then also $(1^{\Nb})$.
\end{proof}

By Sec.~\ref{sec:dualities}, this bosonic flat-band antiferromagnetism is the equivalent of the flat-band ferromagnetism of SU($N$) fermions~\cite{Mielke_1991a,Tasaki1992,Mielke1993, LiuNieZhang2019, TamuraKatsura2019}.
The same orbital state is carried by the fully symmetric color irrep for fermions and by the fully antisymmetric one for bosons.
The key difference is that the fully antisymmetric color irrep exists only for $\Nb\le N$.

At finite density and fixed $N$, this simple proposition does not apply directly.
Nonetheless, for flat-band Hamiltonians with finite-range hoppings, it is possible to build compact localized states (CLS)~\cite{Sutherland1986, Bergman2008, Leykam2018, Flach_2014} within the flat band.
Through the examples of the sawtooth chain and the kagome lattice, we show how to generalize the proposition to build finite-density flat-band singlet-cluster states.

\subsection{The sawtooth chain}\label{sec:sawtooth}
\paragraph{Model.}
The sawtooth chain is the minimal lattice with a frustration-induced flat band.
For spinless bosons it hosts a crystal of localized states at $\rho=1/4$, and a supersolid above this density~\cite{Huber2010}; its fermionic SU($2$) and SU($N$) versions display flat-band ferromagnetism~\cite{Derzhko2010,TamuraKatsura2019}.
Cold-atoms realizations of this lattice have been proposed~\cite{Zhang2015}, and the flat band has been observed in photonic lattices~\cite{Weimann2016}.
We consider $N_u$ two-site unit cells, $L=2N_u$, with periodic boundary conditions:
\begin{multline}
H_{\mathrm{saw}} =t\sum_{j}\left(E_{(j,b),(j+1,b)}+\mathrm{h.c.}\right)
+\frac{U}{2}\sum_{j, s=a,b}E_{(j,s),(j,s)}^2  \label{eq:Hsaw}\\
+ t' \sum_j\left(E_{(j,b),(j,a)}+E_{(j,a),(j+1,b)}+\mathrm{h.c.}\right) ,
\end{multline}
where $(j,b)$ and $(j,a)$ denote the base and apex sites of cell $j$.
A gauge transformation on the apex sites flips the sign of $t'$, so that frustration is controlled by the sign of $t$ alone; we take $t>0$.
The single-particle energies are
\begin{equation}
    \epsilon_{k, \pm}=t\cos k \pm \sqrt{t^2 \cos^2k + 2t'^2(1 + \cos k)}.
\end{equation}

\paragraph{Flat-band cluster states up to $\rho_c=N/[2(N+1)]$.}
At $t' = \sqrt{2}\,t$, the lower band is exactly flat, $\epsilon_{k,-} = -2t$, and separated by a gap $2t$ from the upper band $\epsilon_{k,+}=2t(1+\cos k)$.
It is spanned by the CLS~\cite{Flach_2014}
\begin{equation}
    w_{j, \sigma}^\dagger = \frac{\sqrt{2}}{2}\, c^\dagger_{(j,b), \sigma}
    -\frac{1}{2}\, c^\dagger_{(j-1,a), \sigma}-\frac{1}{2}\, c^\dagger_{(j,a), \sigma},
\end{equation}
which are normalized, non-orthogonal but linearly independent~\cite{Derzhko2015}.
For a cluster of $N$ consecutive cells starting at $j_0$, the color-antisymmetric (unnormalized) state
\begin{equation}
\sum_{\pi\in S_N}\mathrm{sgn}(\pi)\prod_{k=0}^{N-1} w^\dagger_{j_0+k,\pi(k)}|0\rangle
\end{equation}
lies in the flat band and has no double occupancy: it saturates both bounds of Proposition~\ref{prop:flatband} and is an SU($N$) singlet.
Clusters separated by at least one empty cell share no site, and their product minimizes both the kinetic and the interaction energy.
For $\Nb=mN$, this construction yields exact singlet groundstates up to one cluster of $N$ cells per $N+1$ cells, i.e., up to the critical density
\begin{equation}
\rho_c(N)=\frac{N}{2(N+1)}.
\end{equation}
At the critical density, the groundstate is therefore a crystal of singlet $N$-mers separated by single empty unit cells.
For $N=1$, it reduces to the known crystal of localized states at $\rho=1/4$
\cite{Richter2004,DerzhkoRichter2004,Huber2010}.
For $N=2$ the clusters are singlets of two neighboring localized orbitals, and the close-packed state is a crystal of such singlets, reminiscent of the exact valence-bond groundstates of the spin-1/2 sawtooth chain~\cite{MontiSuto1991,NakamuraKubo1996,Sen1996} and of the Majumdar-Ghosh chain~\cite{MajumdarGhosh1969}. 
For general $N$ it is the analogue of SU($N$) valence-bond solids built from $N$-site singlets~\cite{AffleckAMR1991,GreiterRachel2007}.
By contrast, the fermionic clusters built on the same orbital states are color-symmetric and can be arbitrarily large~\cite{Maksymenko2012,LiuNieZhang2019}.\\

We present in Fig.~\ref{fig:EnergiesSawtooth} the energy per particle of the lowest state in each color sector $\alpha$, for a periodic sawtooth chain with $L=16$ sites ($N_u=8$ unit cells), $U=1$, and $\Nb=5$ to $8$ particles.
For each irrep $\alpha$, the most compact flat-band state that can be constructed is a set of $\T\alpha_j$ consecutive CLS $w_{j, \sigma}$, separated by a single empty unit cell.
The lowest state of the irrep $\alpha$ thus reaches the flat-band energy if and only if $\Nb+\alpha_1\le N_u$, with the single exception of $\alpha=(1^{N_u})$, where the cluster covers the whole chain and closes around the ring.
For $\Nb\le4$, i.e., at or below $\rho_c(1)=1/4$, all irreps reach the flat-band energy.
For $M=5$, the density exceeds $\rho_c(1)$: spinless bosons are necessarily frustrated, and the irreps $(5)$ and $(4,1)$ lie above the flat-band energy.
For $\Nb\ge2$, flat-band states exist in the irreps with $\alpha_1\le3$, e.g.\ $(3,2)$.
Similarly, at $\Nb=6$, which is precisely $\rho_c(3)$, flat-band states exist only for $\alpha_1\le2$, hence only for $N\ge3$, e.g.\ in $(2,2,2)$.
At $\Nb=7$, which is precisely $\rho_c(7)=7/16$, only $(1^7)$ reaches the flat-band energy, which requires $N\ge7$.
Finally, at $M=8$ ($\rho=1/2$), the density exceeds $\rho_c(N)$ for all $N$.
The flat-band state found in $(1^8)$ for $N\ge 8$ is a finite-size effect: it is the completely filled flat band.

Above the critical density, the interaction energy no longer vanishes, and antisymmetric orbital wave functions are favored, as they suppress multiple occupancies.
Consistently, we observed that at fixed $\ell(\alpha)$, the groundstate generally lies in the most balanced irrep possible.
We verified this for $U\in[0.1,20]$ and $\Nb = 6$ and $8$.
For fixed $N$, it is also generally the case: the groundstate tends to be in the most antisymmetric irrep (with largest $\ell(\alpha)$).
This rule is, however, not general.
For $\Nb=8$, the irrep $(2^4)$ remains the groundstate for SU($5$) and SU($6$), below the more antisymmetric admissible irreps such as $(2,2,2,1,1)$.
We nonetheless expect this to be a finite-size effect, caused by the absence of SU($5$) and SU($6$) singlets at this particle number.
At half filling $\Nb=N_u$, the first size at which complex SU(3), SU(4) and SU(6) singlets coexist is $\Nb=12$, where $(2^6)$ and $(3^4)$ have dimensions $3.4\times10^{9}$ and $2.7\times10^{10}$.
The latter lies beyond the reach of this work.

\begin{figure*}
    \centering   \includegraphics[width=\linewidth]{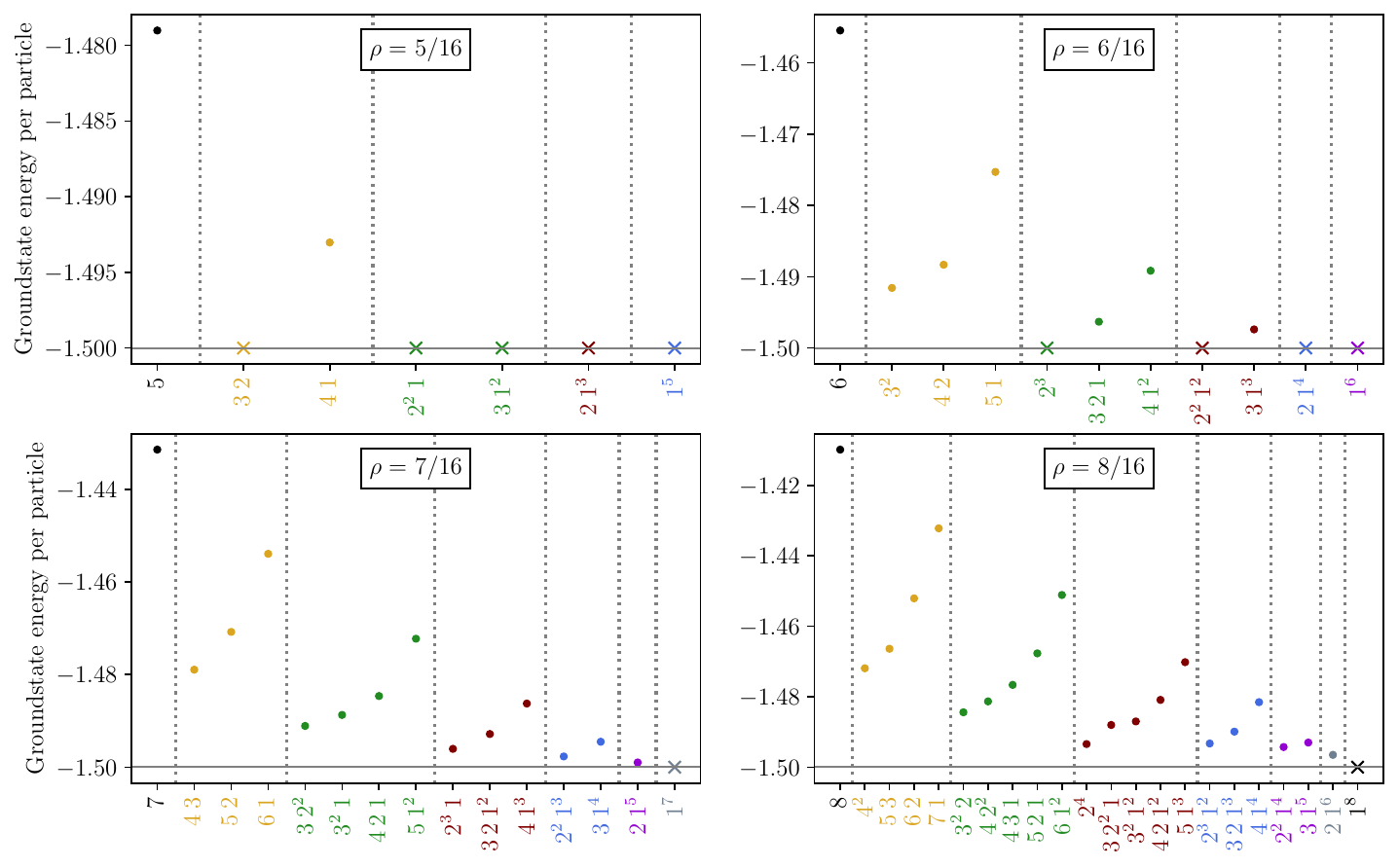}
    \caption{Energy of the lowest state of each irrep for a sawtooth chain with $L = 16$ sites at $U=1$, for different particle densities. The horizontal lines mark the flat-band energy $-2t + \frac{U}{2}$, with crosses lying exactly at this energy. Vertical lines and colors are guides to the eye and separate irreps with increasing $\ell(\alpha)$. As we increase densities towards $\rho = 1/2$, a larger symmetry is required to stabilize the exact antiferromagnetic flat-band states.}
    \label{fig:EnergiesSawtooth}
\end{figure*}

\subsection{The kagome lattice} \label{sec:kagome}
\paragraph{Model.}
The kagome lattice is the two-dimensional counterpart of the previous lattice, with the bosonic Hamiltonian
\begin{equation}
    H_{\rm kag} = t\sum_{\langle i, j\rangle} (E_{i, j} + \mathrm{h.c.})
    + \frac{U}{2} \sum_{i} E_{i, i}^2,
\end{equation}
with $t > 0$ and periodic boundary conditions.
We denote by $N_u$ the number of unit cell in the lattice.
The lowest single-particle band is flat, at $\epsilon=-2t$, with a quadratic band touching at $\vec{k} = \vec{0}$.
It is spanned by the $N_u - 1$ hexagon states
\begin{equation}
    w^\dagger_{\hexagon, \sigma} =\frac{1}{\sqrt{6}} \sum_{j = 1}^6 (-1)^j c^\dagger_{j, \sigma},
\end{equation}
where $j$ labels the sites around a hexagon, together with two additional non-contractible loop states winding around the two cycles of the torus~\cite{Derzhko2015}.
Flat bands of the kagome lattice have been realized with ultracold bosons~\cite{Jo2012,Leun2020}, where the flat band is accessed as an excited band. 
Making it the lowest band requires reversing the sign of the tunneling, e.g.\ by periodic driving~\cite{Lignier2007,Struck2011,Eckardt2017} or through negative temperature states~\cite{Braun2013}.\\

\paragraph{Flat-band cluster states.}
For SU(2) fermions, flat-band ferromagnetism on the kagome lattice goes back to Refs.~\cite{Mielke_1991a, Mielke_1991b, Mielke1992,TanakaUeda2003}, and it was recently extended to SU($N$)~\cite{jin2026flatbandferromagnetismsunhubbard}.
For bosons, the constraint $\ell(\alpha)\le N$ changes the structure of the groundstates.
For $\Nb\le N$, Proposition~\ref{prop:flatband} applies.
At finite density, we use the construction of Sec.~\ref{sec:sawtooth}: $n\le N$ neighboring hexagon states in a fully color-antisymmetric state minimize the energy for $\Nb=n$, and the symmetrized product of clusters that share no site, i.e., built on non-adjacent hexagons,  is a groundstate at the corresponding density.
If all clusters contain $n=N$ hexagons, it is an SU($N$) singlet.

The maximal density reachable with this construction is the solution of a packing problem on the triangular lattice formed by the hexagon centers: find the densest arrangement of non-adjacent connected clusters of at most $N$ hexagons.
The first version is the maximum $N$-order bounded component set problem~\cite{mishra2018complexitymaximumkorderbounded}, which reduces to the maximum independent set for $N=1$~\cite{Karp1972} and the maximum dissociation set for $N=2$~\cite{Yannakakis1981}; the singlet version is an induced packing of connected $N$-vertex subgraphs.
These problems are NP-hard on general graphs, but we are not aware of exact results on the triangular lattice for $N\ge2$.
For $N=1$, the problem is the close packing of hard hexagons~\cite{Baxter1980}: one out of three triangular sites can be occupied, giving $\rho_c=1/9$~\cite{Schulenburg2002, Zhitomirsky2004}.
Solving the packing problem on periodic cells of up to $54$ hexagon centers, we obtain the following lower bounds on the singlet critical densities:
\begin{equation}
\rho_{c, 2} \geq \frac{2}{15},\qquad \rho_{c, 3}\geq \frac{4}{27},\qquad \rho_{c, 4} \geq \frac{1}{6}.
\end{equation}
The corresponding packings are shown in Fig.~\ref{fig:patterns}.
A detailed study is left for future work.\\

\begin{figure}
\vspace{0.4cm}
    \centering
    \includegraphics[width=0.9\linewidth]{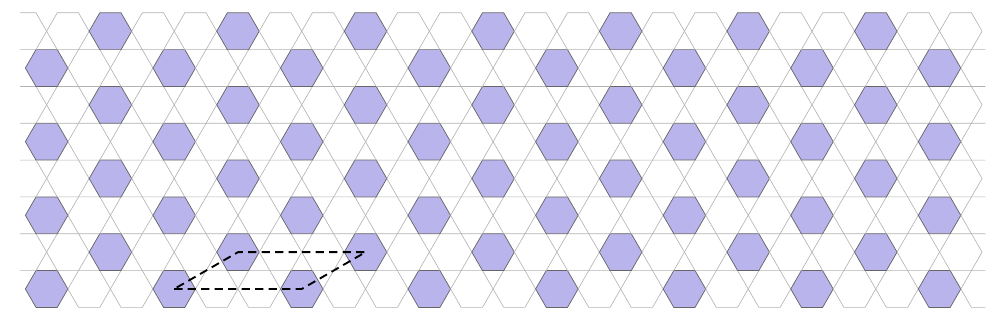}\\
    \includegraphics[width=0.9\linewidth]{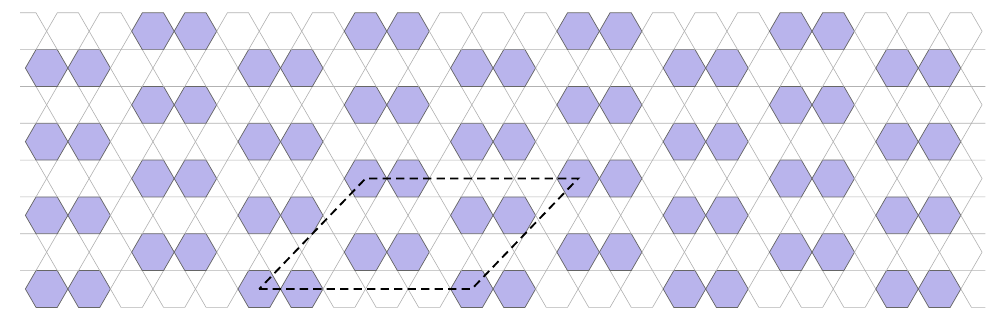}\\
    \includegraphics[width=0.9\linewidth]{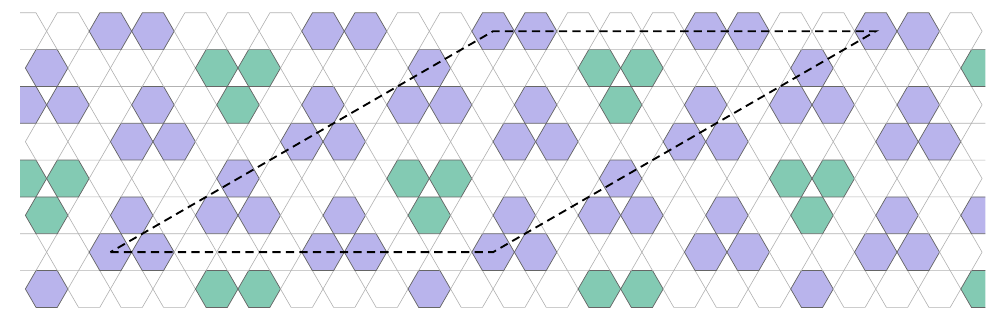}\\
    \includegraphics[width=0.9\linewidth]{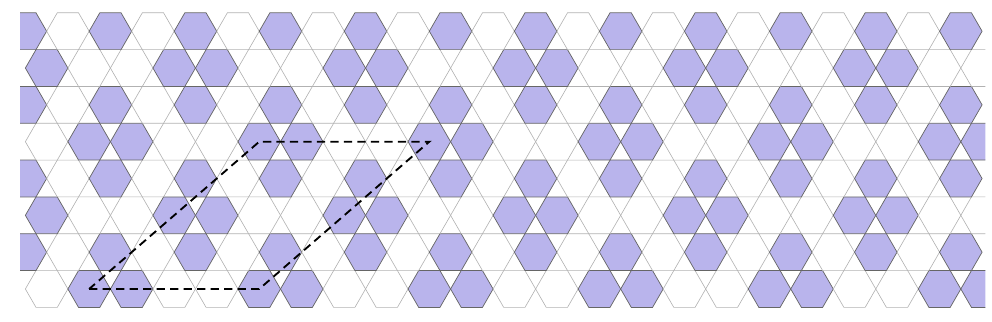}\\
    \caption{Packing solutions for the $N$-cluster problem ($N = 1$ to $4$), with corresponding densities (per site) $\frac{1}{9}$,  $\frac{2}{15}$,  $\frac{4}{27}$ and  $\frac{1}{6}$.}
    \label{fig:patterns}
\end{figure}

Above the critical density, flat-band states with minimal interaction energy no longer exist, and the groundstate color irrep results from a competition between kinetic and interaction energies.
We show in Fig.~\ref{fig:kagomeenergies} the lowest energy in each color sector for the kagome clusters with $L=9$ and $L=12$ sites.
At integer filling ($L=\Nb=9$), strong interactions favor ferromagnetism: at $U=10$, the groundstate lies in the fully symmetric irrep $(\Nb)$, and at fixed $\ell(\alpha)$ the energy decreases with increasing $\alpha_1$.
This follows from the Mott limit.
At second order in $t/U$, the effective Hamiltonian is $H_{\rm eff}=-(2t^2/U)\sum_{\langle ij\rangle}(1+P^{\mathrm{orb}}_{ij})$, where $P^{\mathrm{orb}}_{ij}$ exchange the orbital states at position $i$ and $j$,  whose groundstate lies in the orbital sector $(\Nb)$.
Note that the correspondence of Sec.~\ref{sec:spectral} also guarantees ferromagnetism as the Mott limit is antiferromagnetic for fermions.
Since $H_{\rm eff}$ depends on $t^2$ only, the sign of the hopping, and hence the frustration, drops out at this order.
Although the Eisenberg-Lieb theorem of Sec.~\ref{sec:gain} does not apply to the kagome lattice, strong interactions thus restore an SU($N$) ferromagnetic groundstate.

At weaker interactions ($U=1$), the kinetic energy of the flat band, the dispersive band touching it at $\Gamma$, and the interaction energy compete, and the groundstate lies in intermediate irreps.
At fixed $\ell(\alpha)$, the more antisymmetric irreps remain favored, as they reduce multiple occupancies.
Away from integer filling, even at strong coupling ($U = 100$, $\rho=2/3$), the ferromagnetic order is not selected, consistent with the Mott-picture.
Instead, antiferromagnetic singlet states at intermediate $N$ are stabilized.
We stress that these clusters accommodate a single flat-band cluster at most, so that the intermediate-coupling results are strongly affected by finite-size effects, and should not be taken as indicative of thermodynamic phases.

\begin{figure*}
    \centering
    \includegraphics[width=0.98\linewidth]{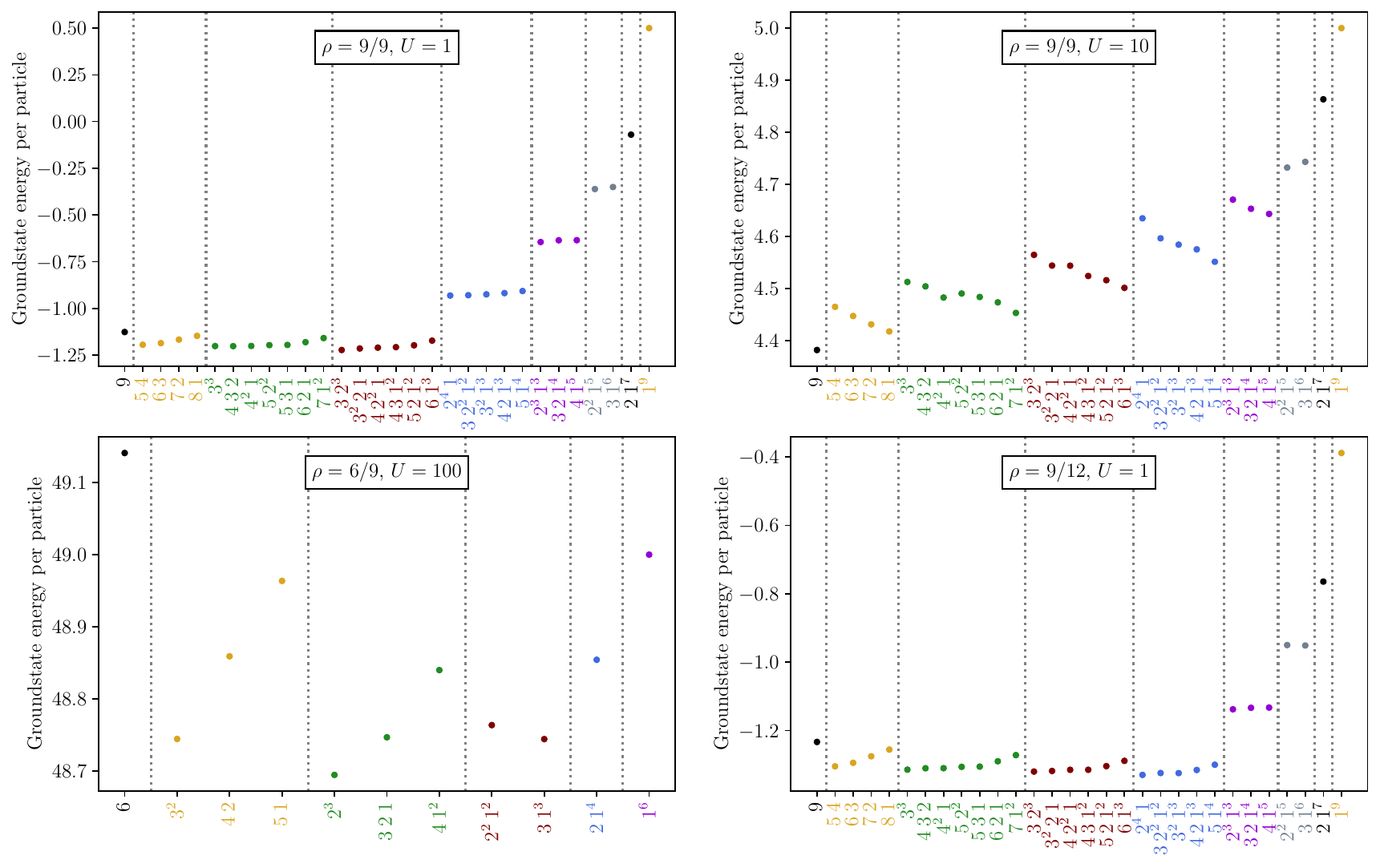}
    \caption{Groundstate energy per particle in each symmetry sector for $L = 9$ and $L = 12$, at several densities and interaction strengths. At integer densities, a ferromagnetic groundstate is stabilized in the Mott regime. At densities between integer and the critical packing density, intermediate singlet states are stable.}
\label{fig:kagomeenergies}
\end{figure*}

\section{Conclusion}
We have extended color-resolved exact diagonalization from SU($N$) fermions to SU($N$) bosons, providing an exact numerical tool for SU($N$)-symmetric bosonic gases~\citep{Mukherjee2025,MukherjeeHutson2025}.
The method is based on the Howe duality between U($L$) and U($N$).
Orbital and color irreps coincide for bosons while they are transposed for fermions.
This has three general consequences: the orbital spectrum in each sector is independent of $N$ or of the statistics, bosonic and fermionic spectra obey explicit inclusions on any lattice, and the canonical partition functions satisfy the exact correspondence $Z_{\rm b}(-N)=(-1)^{\Nb}Z_{\rm f}(N)$.
For non-frustrated hoppings whose signs can be gauged away, the Eisenberg-Lieb~\cite{Suto_1993,Eisenberg2002} theorem makes the bosonic groundstate an SU($N$) ferromagnet, and the color-resolved basis is then useful for excited sectors, finite temperature and dynamics.

For few-site rings at large $N$, we obtained the finite-temperature transitions of attractive bosons at arbitrary filling $\nu=\Nb/N$.
In the strong-coupling limit, the transition is continuous for $L=2$, at $\beta_cU\Nb=-4/(2+\nu)$, and first order for $L\ge3$; the hopping shifts $T_c$ at finite $|U|\Nb/t$.
The fermionic results of Ref.~\cite{Herviou2026} are recovered for $\nu=-1$, a direct illustration of the $N\to-N$ correspondence.

With frustrated hopping, bosons can escape color ferromagnetism.
We proved that for $\Nb\le\min(N,D_{\rm flat})$ a flat lowest band stabilizes a fully color-antisymmetric groundstate, which is an SU($N$) singlet for $\Nb=N$.
At finite density, exact groundstates can be built from clusters of at most $N$ overlapping  CLS, with distinct clusters sharing no site.
On the sawtooth chain, the lowest state of the irrep $\alpha$ reaches the flat-band energy if and only if $\Nb+\alpha_1\le N_u$.
It yields the critical density $\rho_c(N)=N/[2(N+1)]$ and generalizes the localized-magnon crystal of spinless bosons.
On the kagome lattice, the same question maps onto a packing problem on the triangular lattice of hexagon centers, with connected components of at most $N$ hexagons.

Above the critical density, the color structure of the groundstate results from a genuine competition between kinetic and interaction energies.
On the sawtooth chain at half filling, antisymmetric irreps remain generally favored, as they suppress multiple occupancies.
On kagome clusters at integer filling, strong interactions restore an SU($N$) ferromagnet.
At intermediate couplings, the groundstate lies in intermediate irreps, a regime where our clusters remain strongly affected by finite-size effects.\\

The present numerical results remain limited by exact diagonalization and by the intrinsically large Hilbert spaces of bosonic systems.
Extending the SU($N$)-resolved DMRG algorithm of Ref.~\cite{Nataf2024DMRG} to bosons, with a cutoff on the local occupation, is a natural next step and would give access to significantly larger one-dimensional systems.
Two-dimensional systems remain challenging despite the exact resolution of the SU($N$) symmetry.
For flat-band physics, projecting the orbital problem onto the flat band reduces its dimension from $d_{\alpha,L}$ to $d_{\alpha,D_{\rm flat}}$ and should allow exact tests of the critical densities on larger tori.
Beyond this regime, SU($N$)-symmetric two-dimensional tensor networks appear the most promising route to strongly interacting frustrated bosons, while quantum Monte Carlo, which suffers from a sign problem with frustrated hopping, remains limited to unfrustrated lattices.

\acknowledgments
The author thanks Jeanne Colbois, Philippe Lecheminant and in particular Pierre Nataf for useful discussions.
He acknowledges the support of the ANR JCJC ANR-25-CE30-2205-01.
During the preparation of this work, the author used generative AI to improve the language and readability of the manuscript.
After using this tool, the author reviewed and edited the content as needed and takes full responsibility for the content of the publication.

\bibliography{bibliography_additions}

\end{document}